\documentclass[a4paper,11pt]{article}
\usepackage[margin=1.1in]{geometry}
\usepackage{f420}

\title{Enumerating All Convex Polyhedra Glued from Squares in Polynomial Time}
\author{{\scshape Langerman} Stefan\and {\scshape Potvin} Nicolas \and {\scshape Zolotov} Boris}
\date{\today}

\begin{document} \maketitle

\begin{abstract}We present an algorithm that enumerates and classifies all edge-to-edge gluings of unit squares that correspond to convex polyhedra. We show that the number of such gluings of $n$ squares is polynomial in $n$, and the algorithm runs in time polynomial in $n$ (pseudopolynomial if $n$ is considered the only input). Our technique can be applied in several similar settings, including gluings of regular hexagons and triangles.\end{abstract}

\section{Introduction}

Given a collection of 2D polygons, a \emph{gluing} describes a closed surface by specifying how to glue each edge of these polygons onto another edge. Alexandrov's uniqueness theorem~\cite{alex} states that any valid gluing that is homeomorphic to a sphere and that does not yield a total facial angle greater than $2\pi$ at any point, corresponds to the surface of a unique convex 3D polyhedron (doubly covered convex polygons are also regarded as polyhedra). Note that the original polygonal pieces might need to be folded to obtain this 3D surface.

There is no known exact algorithm for reconstructing the 3D polyhedron~\cite{bannister2014galois,kpd09-approx}. Enumerating all possible valid gluings is also not an easy task, as the number of gluings can be exponential even for a single polygon~\cite{DDLO02}. Complete enumerations of gluings and the resulting polyhedra are only known for very specific cases such as the Latin cross~\cite{ddlop99}, a single regular convex polygon~\cite{DO07}, and a collection of regular pentagons~\cite{alz-penta}.

The case when the polygons to be glued together are all identical regular $k$-gons, and the gluing is \emph{edge-to-edge} was studied recently for $k \ge 6$~\cite{kl17-hex}. The aim of this paper is to study the case of $k=4$: namely, to {\it enumerate} all valid gluings of squares and {\it classify} them up to isomorphism.

\section{Chen—Han algorithm for gluings of squares}

In~\cite{DO07} it is shown that polyhedra are isomorphic if the lengths of shortest geodesic paths between their vertices of nonzero curvature coincide. Thus, the problem of finding out if two gluings are isomorphic can be reduced to finding out the geodesic distances between vertices of a gluing. Algorithm we are using for this is the Chen—Han algorithm~\cite{chen-han}.

The idea of the algorithm is to project a cone of all possible paths from the source onto the surface of the gluing. This algorithm runs in $O(n^2)$ time. To apply it for arbitrary edge-to-edge gluings of squares, it has to be proven that the running time is preserved. To do this, we prove the following Theorem.

\begin{theorem} \label{thm:shortestSquare}
	If $T$ is a square of the gluing and $\pi$ is a geodesic shortest path between two vertices of the gluing then the intersection between $\pi$ and $T$ is of at most 5 segments.
\end{theorem}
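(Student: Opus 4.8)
The plan is to develop $\pi$ into the plane along the squares it meets and to use the rigidity forced by the edge-to-edge condition. Since $\pi$ is shortest it is a geodesic whose relative interior avoids every vertex of positive curvature, so we may cut the gluing open along $\pi$ and lay the traversed squares out flat. Two consecutive squares of this strip are glued along a \emph{whole} unit edge, so once the first square is placed as an axis-aligned unit cell of the integer grid every subsequent square is again such a cell, and $\pi$ unrolls to a straight segment $L$ between two lattice points. Because a straight segment meets a convex set in a single sub-segment, the cells of this development are pairwise distinct; hence the segments of $\pi\cap T$ correspond bijectively to the cells of the development that carry the label $T$, and it is enough to bound the number of those cells by $5$.

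Next I would observe that all chords of $\pi\cap T$ are grid-parallel. Fixing one occurrence of $T$ in the development, each other cell labelled $T$ is identified with it by a composition of edge-to-edge gluings, i.e.\ by a symmetry of the unit square; since the surface is a sphere these symmetries are orientation-preserving, hence rotations by multiples of $\frac{\pi}{2}$. As $L$ has a single direction, every chord of $\pi\cap T$ is therefore parallel, inside $T$, to one of two fixed perpendicular directions. (This is the same phenomenon as the holonomy of the flat cone metric lying in $\frac{\pi}{2}\mathbf{Z}$, which holds because every vertex of a gluing of squares has curvature $2\pi-k\frac{\pi}{2}$ with $k\le 4$.) Moreover, a shortest path does not cross itself, so the chords are pairwise disjoint.

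It then remains to count disjoint chords of a shortest path inside a unit square, all parallel to two perpendicular directions. A straight chord meets each of the four sides at most once, so it joins two distinct sides — either opposite sides (a \emph{traversing} chord) or adjacent sides (a \emph{corner} chord) — unless one of its ends is a corner of the square through which $\pi$ passes or is an endpoint of $\pi$. Two traversing chords in perpendicular directions must meet, so traversing chords occur in only one of the two directions, and several corner chords at a common corner all cross the two sides incident to that corner. Combining these incidence obstructions with an a priori bound on how often $\pi$ can cross a single edge of $T$ (equivalently, revisit $T$), a short case analysis over which sides carry traversing versus corner chords gives the claimed bound $5$.

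The real work is that edge-crossing bound: a shortest path between vertices crosses any single edge of the gluing at most twice. This is where convexity — non-negativity of the curvature — is unavoidable, and I would get it from Gauss--Bonnet: an edge is an intrinsic geodesic away from its endpoints, so three transversal crossings would bound a disk whose boundary is a sub-arc of $\pi$ together with a sub-segment of the edge and whose only turning consists of two exterior angles; non-negativity of the enclosed curvature forces those angles to be non-obtuse, and pitting a back-and-forth of $\pi$ across the edge against the possibility of a shortcut along the edge produces the contradiction, the $\frac{\pi}{2}$-quantization of curvature being what pins the final constant. The rest is bookkeeping: crossings occurring at a vertex of an edge, stretches of $\pi$ lying along an edge, and the few extra chords that a flat ($4$-square) corner of $T$, or an endpoint of $\pi$ sitting at a corner of $T$, may contribute, must all be checked not to push the total past $5$.
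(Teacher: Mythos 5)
Your setup is sound and genuinely different from the paper's: the development of $\pi$ into a straight segment between lattice points, the holonomy argument showing all chords of $\pi\cap T$ lie in two perpendicular directions, and the disjointness of the chords are all correct, and they cleanly split the chords into ``traversing'' (opposite sides, necessarily all parallel) and ``corner'' chords. The paper instead classifies chords by how many vertices of $T$ they cut off and kills repetitions with a purely metric exclusion argument (Lemma~\ref{lm:pastFuture}: no other point of a shortest path can lie in the disk with a chord as diameter, hence at most one chord per $\ge 90^\circ$ corner and at most one opposite-sides chord), which needs no holonomy and no edge-crossing lemma.

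The gap is that everything quantitative in your proof is delegated to the claim that a shortest path crosses a single edge of the gluing at most twice, and that claim is neither proved nor, I believe, safe. Your Gauss--Bonnet sketch does not go through as stated: for a disk $D$ bounded by a subarc of $\pi$ and a subsegment of the edge, Gauss--Bonnet gives $\theta_1+\theta_2 = 2\pi - \kappa(D)$ with $\kappa(D)\ge 0$, which bounds the \emph{sum} of the exterior angles by $2\pi$ but in no way forces either angle to be non-obtuse unless you separately show $\kappa(D)\ge\pi$; the subsequent ``shortcut along the edge'' step is only asserted. Worse, the claim is in tension with the configuration your own counting must allow: if a corner $A$ of $T$ is a cone point of angle $\frac{\pi}{2}$, the two sides of $T$ at $A$ are glued to each other, so a single corner chord at $A$ already crosses that one edge of the gluing twice, and a further traversing chord through side $AB$ (perfectly compatible with disjointness and the two-direction constraint) makes three crossings. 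Finally, even granting the lemma, your accounting gives $t+c_A+c_B\le 2$ and $t+c_C+c_D\le 2$, hence at most $4-t\le 4$ transversal chords --- not $5$ --- so the entire burden of reaching (and not exceeding) the stated bound falls on the ``exceptional'' chords through corners, along edges, or ending at endpoints of $\pi$, and that case analysis is exactly the part you have not carried out. As written, the proof establishes the reduction to a planar counting problem but not the count itself.
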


To prove this Theorem, we need an additional definition and a series of lemmas.

\begin{definition}
	Let $a_ib_i$ be a segment of the intersection between $\pi$ and $T$ with $a_i, b_i \in \partial T$. The points $a_i$, $b_i$ divide the boundary of $T$ into two parts. Let $n_1$ and $n_2$ be the numbers of vertices of $T$ in the parts respectively. Then the segment $a_ib_i$ has \emph{type $s$} if $\min (n_1, n_2) = s$. If $T$ is a convex $r$-gon then the possible types of a segment can range between 1 and $\lfloor r / 2 \rfloor$ (see Figure~\ref{fig:segmType}).
\end{definition}

\begin{figure}[h]
	\input{img/past-and-future-circles}
\end{figure}

\begin{lemma}
\label{lm:pastFuture}
	Let the intersection between $\pi$ and $T$ consist of $m$ segments, denote them by $a_1b_1 \ldots a_mb_m$ according to the order in which they appear in $\pi$. If $a_ib_i$ is a segment of the intersection between $\pi$ and $T$, then \begin{enumerate}
	\item \label{item:past} no point of $a_1b_1, \ldots, a_{i-1}b_{i-1}$ may lie in the disk centered at $b_i$ with radius $|a_ib_i|$ (shown in Figure~\ref{fig:past}),
	\item \label{item:future} no point of $a_{i+1}b_{i+1}, \ldots, a_mb_m$ may lie in the disk centered at $a_i$ with radius $|a_ib_i|$ (shown in Figure~\ref{fig:future}),
	\item \label{item:noother} no point of $a_jb_j$, $j \ne i$ may lie inside the disk having $a_ib_i$ as its diameter.
\end{enumerate}
\end{lemma}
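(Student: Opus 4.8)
The plan is to exploit the local optimality of $\pi$: since $\pi$ is a \emph{shortest} geodesic, for any two points $p, q$ on $\pi$ the subpath of $\pi$ from $p$ to $q$ has length at most that of any competing path on the surface. I would obtain each of the three claims by constructing a shortcut and deriving a contradiction when a point lands inside the forbidden region.

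First I would set up the picture for claim~\ref{item:past}. Let $p$ be a point on $a_jb_j$ with $j < i$, and suppose $p$ lies in the open disk centered at $b_i$ of radius $|a_ib_i|$, so $|pb_i| < |a_ib_i|$. The portion of $\pi$ from $p$ to $b_i$ first runs along $a_jb_j$ to $b_j$, then possibly wanders off $T$ and through other segments, then re-enters along $a_i$ and continues to $b_i$; crucially this portion includes the entire segment $a_ib_i$, so it has length at least $|a_ib_i|$. But the straight segment $pb_i$ lies inside the convex region $T$ (since both endpoints are in $\overline{T}$ and $T$ is convex) and has length strictly less than $|a_ib_i|$. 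Replacing the portion of $\pi$ between $p$ and $b_i$ by the segment $pb_i$ yields a strictly shorter path between the two vertices, contradicting that $\pi$ is shortest --- unless the shortcut fails to be a legal competitor, which can only happen if $pb_i$ is not actually a single chord inside one copy of $T$; I would argue it is, because $p,b_i \in \partial T$ and $T$ is convex. Claim~\ref{item:future} is the time-reversed mirror image: a point $p$ on $a_jb_j$ with $j > i$ inside the disk centered at $a_i$ of radius $|a_ib_i|$ gives $|a_ip| < |a_ib_i|$, and the subpath of $\pi$ from $a_i$ to $p$ contains all of $a_ib_i$, so the chord $a_ip$ inside $T$ is a strictly shorter replacement.

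For claim~\ref{item:noother}, the disk with diameter $a_ib_i$ is the set of points $p$ with $\angle a_i p b_i > \pi/2$; equivalently $|a_ip|^2 + |pb_i|^2 < |a_ib_i|^2$. If a point $p$ of some $a_jb_j$, $j \ne i$, lies inside this disk, I would split into the cases $j < i$ and $j > i$. If $j < i$: the subpath of $\pi$ from $p$ to $b_i$ has length at least $|a_ib_i|$ (it contains $a_ib_i$ as above), while the replacement path $p \to a_i \to b_i$ routed as two chords inside $T$ has length $|pa_i| + |a_ib_i|$. That is not obviously shorter, so instead I route $p \to b_i$ directly as a single chord of $T$: since $p$ is inside the diameter-disk we have $|pb_i| < |a_ib_i|$ (the diameter-disk is contained in the radius-$|a_ib_i|$ disk about $b_i$), so this already follows from claim~\ref{item:past}, and symmetrically for $j>i$ from claim~\ref{item:future}. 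Hence claim~\ref{item:noother} is a consequence of the first two, and the only content to check is the containment of the diameter-disk in each of the two radius disks, which is immediate since every point $p$ with $\angle a_ipb_i \ge \pi/2$ satisfies $|pa_i| \le |a_ib_i|$ and $|pb_i| \le |a_ib_i|$.

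The main obstacle I anticipate is not any of the inequalities --- those are elementary --- but making the ``shortcut'' argument rigorous on a glued surface rather than in the plane. Specifically I must ensure that the chord $pb_i$ (resp.\ $a_ip$) is a genuine, locally-straight path on the gluing that connects the same two endpoints of $\pi$ when substituted, and that the piece of $\pi$ being excised really does contain the whole segment $a_ib_i$ and therefore has length $\ge |a_ib_i|$ --- this uses that $a_ib_i$ is one of the listed intersection segments and that the segments are ordered along $\pi$, so everything between the $p$-occurrence and $b_i$ includes $a_ib_i$ in full. I would state this carefully using that $T$, being a face of the gluing, embeds isometrically as a convex polygon, so any chord of $\overline T$ is a shortest path between its endpoints within $T$ and in particular a valid path on the surface; the strict inequality on lengths then contradicts minimality of $\pi$ and forces the claimed emptiness of each disk.
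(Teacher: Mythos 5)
Your proposal is correct and follows essentially the same route as the paper: the same shortcut argument (replace the portion of $\pi$ from $p$ to $b_i$, which contains all of $a_ib_i$, by the chord $pb_i$ lying in the convex face $T$) for items~\ref{item:past} and~\ref{item:future}, and the same observation for item~\ref{item:noother} that the disk with diameter $a_ib_i$ is contained in both radius-$|a_ib_i|$ disks, so the claim follows from the first two. Your extra care about why the excised subpath has length at least $|a_ib_i|$ and why the chord is a legal competitor on the glued surface is a welcome elaboration of what the paper leaves implicit.
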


\begin{proof} Let us prove item~(\ref{item:past}). If there is a point $p$ of a segment $a_jb_j$ preceding $a_ib_i$ inside that circle then (because $T$ is convex) the segment $pb_i$ lies entirely inside $T$ and is shorter than $a_ib_i$. Then we can replace the path
	\[\pi = \ldots a_jpb_j \ldots a_ib_i \ldots\]
with $\ldots a_jpb_i \ldots$, which is shorter than $\pi$, thus $\pi$ is not the shortest.

	The proof of~(\ref{item:future}) is analogous. To prove~(\ref{item:noother}) let us note that the intersection between the disks from~(\ref{item:past}) and~(\ref{item:future}) covers the smaller circle we are considering (see Figure~\ref{fig:typeOne}), thus no segments either from the past or the future of path $\pi$ can lie inside it.\end{proof}

\begin{figure}[h]
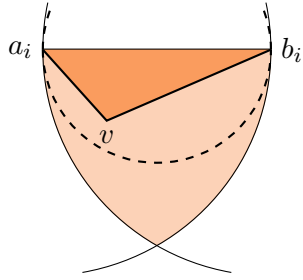

	\centering
	\tikz{
		\filldraw [draw=fill1,fill=fill1,opacity=0.45]
			(-1.5,0) arc(180:240:3cm) arc(300:360:3cm) -- cycle;
		\draw (-1.5,0) arc (180:260:3cm); \draw (1.5,0) arc (360:280:3cm);
		\draw (-1.5,0) arc (180:168:3cm); \draw (1.5,0) arc (360:372:3cm);
		\draw[thick] (-1.5,0) -- (1.5,0);
		\draw (-1.5,0) node[left]{$a_i$}; \draw (1.5,0) node[right]{$b_i$};
		\draw[thick,dashed] (-1.5,0) arc(180:360:1.5cm);
			\draw[thick,dashed] (-1.5,0) arc(180:156:1.5cm);
			\draw[thick,dashed] (1.5,0) arc(360:384:1.5cm);
		\filldraw[thick,draw=black,fill=fill1] (-1.5,0) -- (235:1.15cm) -- (1.5,0);
		\draw (235:1.15cm) node[below]{$v$};
	}
	\caption{No points of path $\pi$ lie inside the disk
		whose diameter is $a_ib_i$ and thus, in turn,
		inside the triangle $a_ivb_i$.}
	\label{fig:typeOne}
\end{figure}

\begin{lemma}
\label{lm:typeOne}
	If the angle at a vertex $v$ of polygon $T$ is not less than $90^\circ$ then there is at most one segment $a_ib_i$ of the intersection $\pi \cap T$ of type 1 such that the vertex $v$ lies between points $a_i$ and $b_i$.
\end{lemma}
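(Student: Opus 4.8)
The plan is to argue by contradiction: assume $\pi\cap T$ contains two distinct segments $P=pq$ and $P'=p'q'$, each of type $1$ with $v$ the unique vertex of $T$ on one of the two boundary arcs it determines, and derive a contradiction with Lemma~\ref{lm:pastFuture}(\ref{item:noother}). The first step is a structural observation. If a type-$1$ segment has $v$ as the single vertex on one of its arcs, then walking along $\partial T$ from either endpoint of the segment toward $v$ one meets no other vertex, so that endpoint lies in the interior of the edge of $T$ terminating at $v$. Hence, writing $e_1,e_2$ for the two edges of $T$ incident to $v$, we may take $p,p'\in e_1$ and $q,q'\in e_2$; then $vp$ and $vq$ run along $e_1$ and $e_2$, so $\angle pvq=\angle p'vq'$ equals the interior angle of $T$ at $v$, which is at least $90^\circ$ by hypothesis, and none of $p,q,p',q'$ coincides with $v$.

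The geometric core is Thales' theorem. Since $\angle pvq\ge 90^\circ$, the point $v$ lies in the closed disk $D$ with diameter $pq$; as $D$ is convex and contains $p,q,v$, it contains the whole triangle $\triangle pvq$ --- this is exactly the triangle $a_ivb_i$ of Figure~\ref{fig:typeOne}. Likewise $v$, and hence $\triangle p'vq'$, lies in the closed disk $D'$ with diameter $p'q'$. Now compare the two segments by distances from $v$. Relabelling so that $|vp|\le|vp'|$ (hence $p\in[v,p']$), there are two sub-cases. If $|vq|\le|vq'|$ then $q\in[v,q']$, so $P=pq\subseteq\triangle p'vq'\subseteq D'$. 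If $|vq|>|vq'|$ then $q'\in[v,q]\subseteq\triangle pvq\subseteq D$. Either way, a point of one of the two segments lies in the closed disk determined by the other.

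It remains to upgrade ``closed disk'' to ``open disk'' so that Lemma~\ref{lm:pastFuture}(\ref{item:noother}) applies. Since $P\ne P'$, in the first sub-case $p\ne p'$ or $q\ne q'$, and the differing endpoint --- say $p$ --- is distinct from both $p'$ and $v$, so it lies in the relative interior of the chord $[v,p']$ of $D'$; this chord is not a diameter of $D'$ (the diameter through $p'$ ends at $q'$, and $v\ne q'$), so its relative interior lies in the open disk. Thus $p$, a point of the segment $P$ of $\pi\cap T$, lies strictly inside the disk whose diameter is the segment $P'$ of $\pi\cap T$, contradicting Lemma~\ref{lm:pastFuture}(\ref{item:noother}); the second sub-case is identical with the roles of $P,P'$ and $D,D'$ exchanged, using $q'$. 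Hence at most one such segment exists. I expect the only delicate bookkeeping to be this last step --- the borderline angle $90^\circ$, where $v$ lies on the circle rather than strictly inside it, and the possibility that $P$ and $P'$ share a boundary endpoint --- both handled uniformly by the remark that $[v,\cdot]$ is always a non-diametral chord; the substance of the proof is the Thales observation that the angle hypothesis traps the corner triangle $\triangle pvq$ inside the diametral disk of $pq$.
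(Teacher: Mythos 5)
Your proof is correct and follows essentially the same route as the paper's: both rest on Lemma~\ref{lm:pastFuture}(\ref{item:noother}) together with the Thales observation that the $\ge 90^\circ$ angle at $v$ traps the triangle $a_ivb_i$ inside the disk with diameter $a_ib_i$. The only difference is bookkeeping --- the paper selects the segment whose endpoint is farthest from $v$ and asserts that every other such segment meets its triangle, whereas you compare an arbitrary pair of segments by endpoint distances and also settle the closed-versus-open disk boundary case, which the paper leaves implicit.
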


\begin{proof} Suppose there are multiple segments satisfying the statement of the lemma. Let us denote by $a_ib_i$ the one whose end is the farthest from $v$. Due to Lemma~\ref{lm:pastFuture} all the points of the segments $a_jb_j$, $j \ne i$ of the intersection $\pi \cap T$ must lie outside the disk $D$ whose diameter is $a_ib_i$.
	
	This disk covers all the points $s$ such that $\measuredangle a_isb_i \ge 90^\circ$, thus it covers the triangle $a_ivb_i$ (see Figure~\ref{fig:typeOne}) and consequently no other points of the path $\pi$ can lie inside this triangle.

	All the other segments satisfying the statement have at least some of their points lying inside $a_ivb_i$. Thus we arrive to a contradiction: $\pi$ is not the shortest path.\end{proof}

\begin{lemma}
\label{lm:squareTypeTwo}
	Let $T = ABCD$ be a square, then there is at most one segment
	of type 2 in $\pi \cap T$.
\end{lemma}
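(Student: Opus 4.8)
The plan is to argue by contradiction, using that in a unit square a segment of type $2$ joins the interiors of two opposite sides and hence has length at least $1$. First I would pin down the shape of a type-$2$ segment of $T=ABCD$: since $T$ has four vertices, type $2$ means that each of the two boundary arcs cut off by the segment carries exactly two of them, which is possible only when $a_i$ and $b_i$ lie in the relative interiors of a pair of \emph{opposite} sides (an endpoint at a vertex of $T$, or both endpoints on one side or on two adjacent sides, would give type $0$ or $1$). In particular such a segment crosses from one side of the unit square to the parallel one and has length $\ge 1$.

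Now assume for contradiction that $\pi\cap T$ contains two type-$2$ segments $a_ib_i$ and $a_jb_j$ with $i\ne j$. Since a shortest geodesic is a simple arc that is straight inside each flat square, $a_ib_i$ and $a_jb_j$ are disjoint: two segments sharing a point would make $\pi$ non-simple, or, if they were consecutive pieces of $\pi\cap T$, they would be collinear and so form a single segment. I would then split on whether the two segments join the same pair of opposite sides. If they join \emph{different} pairs --- say $a_ib_i$ joins $AB$ to $CD$ and $a_jb_j$ joins $BC$ to $DA$ --- then $a_ib_i$ splits $T$ into two pieces whose boundaries contain $BC$ and $DA$ respectively, so $a_jb_j$ runs from one piece to the other and must cross $a_ib_i$ transversally inside $T$, contradicting simplicity of $\pi$. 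This settles that case.

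In the remaining case both segments join, say, $AB$ to $CD$. Being disjoint chords joining the same pair of opposite sides they are linearly ordered (each separates the square, and the other lies on one side of it), so after reflecting $T$ and relabelling the two segments I may assume $a_ib_i$ lies to the left of $a_jb_j$ and precedes it along $\pi$. Placing $T=[0,1]^2$ with $AB\subset\{y=0\}$ and $CD\subset\{y=1\}$, I write the endpoints of the two segments as $(p_i,0),(q_i,1)$ and $(p_j,0),(q_j,1)$. By the first paragraph $p_i,p_j,q_i,q_j\in(0,1)$, and the ordering gives $p_i<p_j$ and $q_i<q_j$, whence $p_j-p_i<1$ and $q_j-q_i<1$.

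Finally I would invoke Lemma~\ref{lm:pastFuture}, item~(\ref{item:future}): as $a_jb_j$ comes after $a_ib_i$ along $\pi$, no point of $a_jb_j$ may lie in the disk of radius $|a_ib_i|\ge 1$ centred at $a_i$. But $a_i$ is one of the two endpoints of $a_ib_i$. If $a_i=(p_i,0)$, then the endpoint $(p_j,0)$ of $a_jb_j$ lies at distance $p_j-p_i<1\le|a_ib_i|$ from $a_i$; if $a_i=(q_i,1)$, then the endpoint $(q_j,1)$ lies at distance $q_j-q_i<1\le|a_ib_i|$ from $a_i$. Either way a point of $a_jb_j$ sits strictly inside the forbidden disk, the desired contradiction, so $\pi\cap T$ has at most one segment of type $2$. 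I expect the only delicate point to be the first step --- making sure the type-$2$ hypothesis genuinely confines both endpoints to the interiors of opposite sides, since this is exactly what makes $p_j-p_i<1$ and $q_j-q_i<1$ strict. Note that item~(\ref{item:noother}) alone would not be enough here: two type-$2$ chords hugging opposite sides of $T$ can stay out of each other's diameter disks, so the larger disk of item~(\ref{item:future}) is really needed.
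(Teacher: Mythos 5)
Your proof is correct and follows essentially the same route as the paper: both reduce to the case where all type-2 segments join the same pair of opposite sides (using that a shortest path cannot self-intersect), and then exploit the fact that a type-2 chord has length at least the side length, so that the forbidden disk of Lemma~\ref{lm:pastFuture} swallows the relevant endpoint of any other type-2 segment. The only (immaterial) difference is that the paper picks the \emph{last} type-2 segment and applies item~(\ref{item:past}) --- its disk centred at $b_i$ covers all of $CD$ --- whereas you pick the first and apply item~(\ref{item:future}).
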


\begin{proof} A segment of type 2 connects either $AB$ with $CD$ or $BC$ with $DA$. Since $\pi$ is a shortest path, segments that are parts of $\pi$ can not intersect each other. Therefore, without loss of generality all the segments of type 2 connect $AB$ with $CD$.
	
	Among all these segments let us consider the segment $a_ib_i$ of type 2 with the greatest possible~$i$. Without loss of generality $a_i \in AB$, $b_i \in CD$. The disk centered at $b_i$ with radius $|a_ib_i| \ge |CD|$ covers $CD$, thus, by Lemma~\ref{lm:pastFuture}, there are no segments $a_jb_j$, $j<i$ with endpoints on $CD$. It means that $a_ib_i$ is the only segment of type 2, which completes the proof.\end{proof}

\begin{proof}[Proof of Theorem~\ref{thm:shortestSquare}]
	A segment in the intersection can either have type 1 or type 2. By Lemma~\ref{lm:typeOne}, there are at most 4 segments of type 1: at most one for each vertex. By Lemma~\ref{lm:squareTypeTwo}, there is at most one segment of type 2. This sums up to 5 segments.\end{proof}

Theorem~\ref{thm:shortestSquare} implies the following Theorem.

\begin{theorem} \label{thm:chruntime}
	The isomorphism between two edge-to-edge gluings of at most $n$ squares can be tested in $O(n^2)$ time.
\end{theorem}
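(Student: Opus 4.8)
The plan is to reduce testing isomorphism to computing a constant-size table of geodesic distances, and then to apply the Chen--Han algorithm, using Theorem~\ref{thm:shortestSquare} exactly where the running-time analysis needs a bound on how a shortest path meets one face. First, as recalled above (following~\cite{DO07}), two gluings yield isomorphic polyhedra if and only if there is a bijection between their sets of vertices of nonzero curvature that preserves curvatures and all pairwise geodesic distances; the forward direction is immediate because an isometry of the surfaces maps cone points to cone points and preserves distances, and the converse is the quoted result. So it suffices, for each gluing, to compute the matrix of geodesic distances between its vertices of nonzero curvature. The crucial point is that this matrix has bounded size: at every vertex of an edge-to-edge gluing of squares the total facial angle is a multiple of $\pi/2$ and, by Alexandrov's theorem, at most $2\pi$, so a vertex of nonzero curvature has curvature at least $\pi/2$; since by Gauss--Bonnet the curvatures sum to $4\pi$, there are at most $8$ such vertices (the cube shows this is tight).

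Next I would run the Chen--Han algorithm once with each vertex of nonzero curvature as the source --- at most $8$ invocations --- obtaining from each run the geodesic distances from that source to all vertices, and in particular to the other cone points. Each invocation must be shown to run in $O(n^2)$ time on a surface whose faces are (folded) unit squares rather than triangles, and this is precisely what Theorem~\ref{thm:shortestSquare} provides: since any shortest path between two vertices meets a given square in at most $5$ segments, every ``window'' propagated across that square during the algorithm corresponds to a shortest subpath that traverses the square a bounded number of times, so the combinatorial quantities controlling the number of windows (and hence the $O(n^2)$ bound of Chen--Han, originally stated for triangulated surfaces) stay $O(n)$ per face and the analysis carries over. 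With $O(1)$ runs of $O(n^2)$ each, we have all required distances in $O(n^2)$ total; comparing the two distance matrices over all curvature-preserving bijections of the $\le 8$ cone points is then $O(1)$, and the whole isomorphism test is $O(n^2)$.

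The step I expect to be the main obstacle is the last part of the previous paragraph: spelling out how the ``at most $5$ segments per square'' bound plugs into the internal accounting of Chen--Han (the sizes of its window lists and the ``one split per angle'' argument), since that algorithm is usually presented for triangulated surfaces with triangular, planar faces, whereas here the faces are non-planar folded squares. A secondary point to make explicit is that a folded square is nonetheless \emph{intrinsically} a flat unit square, so the unfoldings along candidate shortest paths used by the algorithm are well defined and Theorem~\ref{thm:shortestSquare} applies to them as stated; equivalently one may first triangulate each square along an intrinsic diagonal and then invoke Theorem~\ref{thm:shortestSquare} to conclude that no square --- hence no pair of added triangles --- is crossed by a shortest path more than a constant number of times, so triangulation introduces no asymptotic blow-up.
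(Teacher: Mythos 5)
Your proposal follows essentially the same route as the paper, which gives no explicit proof beyond asserting that Theorem~\ref{thm:shortestSquare} implies the result: reduce isomorphism testing to comparing the (at most $8\times 8$, by Gauss--Bonnet) geodesic distance matrices between cone points, and run Chen--Han from each cone point, with the five-segments-per-square bound guaranteeing that the $O(n^2)$ running time is preserved on a surface of glued squares. Your write-up is in fact more detailed than the paper's, and the obstacle you flag --- spelling out how the per-face crossing bound enters Chen--Han's window accounting --- is precisely the step the paper leaves implicit.
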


\section{Bounds on the number of egde-to-edge gluings of squares}

In this section, we prove that the number of edge-to-edge gluings of $n$ squares is polynomial in $n$. This result allows to develop a polynomial algorithm to list all the gluings.

\begin{theorem} \label{thm:n36}
	There are $O \left( n^{36} \right)$ edge-to-edge gluings of at most $n$ squares that correspond to convex polyhedra.
\end{theorem}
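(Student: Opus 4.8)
The plan is to bound the number of edge-to-edge gluings of $n$ squares that yield convex polyhedra by describing each such gluing with a bounded number of integer parameters, each of which ranges over a set of size polynomial in $n$. Since Alexandrov's theorem tells us a convex polyhedron is determined by its intrinsic metric (the gluing), it suffices to count the combinatorially distinct intrinsic metrics. The intrinsic metric built from $n$ unit squares is a flat surface with cone points, and the cone points can only occur at (images of) corners of squares, where the angle is a multiple of $\pi/2$; hence each cone point has curvature in $\{\pi/2, \pi\}$ (curvatures $\le 3\pi/2$ would need angle $\ge \pi/2$ but for a convex polyhedron total angle at a point is $< 2\pi$, so the possible cone angles are $\pi/2$, $\pi$, $3\pi/2$, giving curvatures $3\pi/2$, $\pi$, $\pi/2$). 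By Gauss–Bonnet, the curvatures sum to $4\pi$, so the \textbf{number of cone points is at most $8$} (and at least a small constant). This is the key structural fact that makes the parameter count bounded independently of $n$.

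Next I would argue that the entire gluing — equivalently, the flat metric with its $\le 8$ marked cone points — is reconstructible from the mutual ``shapes'' of these cone points, made precise as follows. Cut the surface along a system of shortest geodesics connecting the cone points so as to unfold it into one or a bounded number of planar polygons whose vertex set is exactly the preimages of the cone points. The combinatorial type of this cut system is bounded (it is a graph on $\le 8$ vertices on a sphere), and the geometry of each resulting planar piece is determined by the coordinates of its vertices. Each such vertex is the endpoint of a shortest geodesic between two cone points; by Theorem~\ref{thm:shortestSquare} such a geodesic crosses each square in at most $5$ segments, and since every segment lives on a scaled integer lattice determined by the square it traverses, each cut-geodesic's development is a polygonal path with a bounded number of breakpoints, each breakpoint having coordinates that are rational with bounded denominator and numerator $O(n)$ (the path cannot be longer than the diameter, which is $O(n)$, and actually $O(\sqrt{n})$, but $O(n)$ suffices). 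Hence the developed coordinates of all cone points are specified by $O(1)$ rationals each drawn from a set of size $n^{O(1)}$.

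Concretely, then, a gluing is encoded by: (i) the number $k \le 8$ of cone points and their curvatures — $O(1)$ choices; (ii) the combinatorial type of the chosen cut system — $O(1)$ choices; (iii) for each of the $O(1)$ edges of the cut system, the developed planar coordinates of its endpoints and breakpoints — each coordinate from a set of size $n^{O(1)}$. Counting the free rational parameters and showing there are at most $36$ of them (each contributing a factor $O(n)$) gives the bound $O(n^{36})$; the exponent $36$ presumably comes from a careful accounting, e.g. a spanning structure on $8$ cone points with a handful of geodesics, each needing a constant number of $O(n)$-sized coordinates, summed and bounded by $36$. Finally I would note that distinct gluings give distinct tuples only after quotienting by the $O(1)$ choices of cut system and by isometries of the plane, which changes the count by at most a constant factor and does not affect the polynomial degree.

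The main obstacle I expect is step (iii): rigorously bounding the bit-complexity (equivalently, the size of the range) of the developed coordinates of cone points. One must show that a shortest geodesic between two cone points, when developed in the plane, has all its vertices at points whose coordinates have denominator bounded by a constant and numerator $O(n)$. The denominator bound should follow from Theorem~\ref{thm:shortestSquare} plus the observation that within a single square the geodesic is a straight chord whose endpoints lie on edges at lattice-rational positions inherited from adjacent squares, so the ``fractions'' compound only a bounded number of times; but making this precise, and in particular controlling how the rational positions on one square's edge are transferred to the next square across the gluing, is the delicate part — it is essentially a continued-fraction / bounded-unfolding argument and is where the constant $36$ will actually be pinned down.
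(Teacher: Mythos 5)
Your overall skeleton matches the paper's: Gauss--Bonnet bounds the number of cone points by $8$, hence a cutting of the surface into flat pieces has bounded combinatorial complexity, and each cut should be described by a constant number of integer parameters of size $O(n)$, with $36 = 2\cdot 18$ coming from two projections for each of the $3\cdot 8-6=18$ edges of a triangulation. However, the execution has genuine gaps at the quantitative heart of the argument. First, your description of a developed cut-geodesic as ``a polygonal path with a bounded number of breakpoints'' is wrong on both counts: a geodesic develops to a single straight segment (there are no breakpoints in the development), and the number of squares it traverses is $O(n)$, not $O(1)$ --- Theorem~\ref{thm:shortestSquare} bounds the crossings \emph{per square}, not in total. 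As literally written, your encoding (iii) records the coordinates of all breakpoints, i.e.\ $\Theta(n)$ parameters each ranging over $n^{O(1)}$ values, which gives $n^{O(n)}$ tuples and no polynomial bound at all. Second, the step you yourself flag as the main obstacle --- bounding the denominators of the developed coordinates --- is exactly what must be proved, and no continued-fraction argument is needed or would help: the resolution is that cone points are corners of squares, and developing across an edge-to-edge gluing of unit squares is a lattice-preserving isometry (a rotation by a multiple of $\pi/2$ composed with an integer translation), so the developed endpoints of every cut lie on the integer lattice and each cut is determined by two \emph{integers} of absolute value $O(n)$. Without this observation the parameter count, and hence the exponent $36$ (which you leave as ``presumably''), is not established.

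The paper takes a shortcut that avoids intrinsic geodesics entirely: it triangulates the faces of the three-dimensional convex polyhedron guaranteed by Alexandrov's theorem, so that every edge of the cut system lies inside a single flat face, develops into a single drawing of that face on the square grid, joins two lattice points, and meets each square at most once --- giving integer projections in $[0,n]$ directly, with no appeal to Theorem~\ref{thm:shortestSquare} and only a constant number of residual assembly choices. Your route through shortest geodesics of the intrinsic metric can be made to work, and has the merit of not invoking the 3D realization, but it additionally requires you to justify that shortest geodesics between cone points form a valid geodesic triangulation whose triangles are flat and reconstructible from their side data, and that their developed lengths are $O(n)$; none of this is yet in your write-up.
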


\begin{proof}
	Triangulate the polyhedron corresponding to the gluing and draw its faces on the square grid. By Gauss—Bonnet theorem, the polyhedron has no more than 8 vertices, and thus at most 18 edges. An edge shared by two faces must have the same lengths of $x$- and $y$-projections on the drawings of these faces, see Figures~\ref{fig:emA},~\ref{fig:emB}.

\begin{figure}[h]
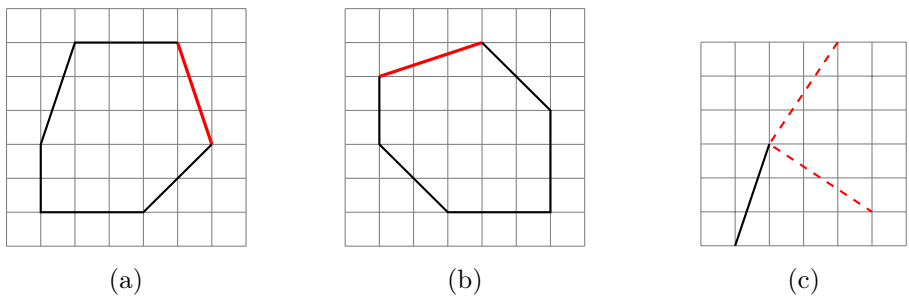
 \centering
\begin{subfigure}[t]{3.6cm} \centering
\tikz[scale=0.45]{
	\foreach \i in {-1,...,6} {
		\draw[gray] (-1,\i) -- (6,\i) (\i,-1) -- (\i,6);
	}
	\draw[very thick,red] (4,5) -- (5,2);
	\draw[thick] (5,2) -- ++(-2,-2) -- ++(-3,0) --
		++(0,2) -- ++(1,3) -- ++(3,0);
} \caption{} \label{fig:emA} \end{subfigure} \hspace{0.6cm}
\begin{subfigure}[t]{3.6cm} \centering
\tikz[scale=0.45]{
	\foreach \i in {-1,...,6} {
		\draw[gray] (-1,\i) -- (6,\i) (\i,-1) -- (\i,6);
	}
	\draw[very thick,red] (0,4) -- (3,5);
	\draw[thick] (3,5) -- ++(2,-2) -- ++(0,-3) --
		++(-3,0) -- ++(-2,2) -- ++(0,2);
} \caption{} \label{fig:emB} \end{subfigure} \hspace{0.8cm}
\begin{subfigure}[t]{3.2cm} \centering
\tikz[scale=0.45]{
	\foreach \i in {0,...,6} {
		\draw[gray] (0,\i) -- (6,\i) (\i,0) -- (\i,6);
	}
	\draw[thick] (1,0)--(2,3);
	\draw[thick,red,dashed] (4,6)--(2,3)--(5,1);
} \caption{} \label{fig:emC} \end{subfigure}

\caption{(a), (b) Highlighted edge has the same lengths of projections on
	the drawings of two faces. (c) Two ways to place an edge with given
	projections that preserve convexity of the face.}
\end{figure}

Count the number of sets of triangles satisfying this restriction and taking up at most $n$ squares. For each edge, we can pick lengths of its $x$- and $y$-projections. Since the edge is a part of a flat face, all the squares that intersect the edge are distinct. There is at most $n$ of them, which yields that both projections are at most $n$, so there is at most $n^2$ ways to choose the edge.

Once the projections of the edges are known, let us draw the faces on the grid. At every vertex, there is at most two ways to place the next edge, such that the convexity of the face is preserved, those differ by $\frac{\pi}{2}$, see Figure~\ref{fig:emC} for an example. This adds at most $2^{2 \cdot 18}$ ways to draw the faces once the edges are known, which gives the total of at most $(2n)^{36}$ gluings.
\end{proof}

\begin{theorem} \label{thm:n3}
	There are $\Omega \left( n^3 \right)$ edge-to-edge gluings of at most $n$ squares that correspond to convex polyhedra.
\end{theorem}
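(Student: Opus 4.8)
The plan is to exhibit an explicit family of $\Omega(n^3)$ pairwise non-isomorphic edge-to-edge gluings of at most $n$ squares, each corresponding to a genuine convex polyhedron. The natural candidates are doubly covered convex polygons: a doubly covered rectangle (or more generally a doubly covered polygon whose boundary is a lattice path on the square grid with all right angles, i.e.\ a rectilinear polygon) is the surface obtained by gluing two congruent copies of that polygon edge-to-edge along their boundaries. Such a surface is a valid Alexandrov gluing — it is homeomorphic to a sphere and has total angle $\pi < 2\pi$ at each boundary vertex and $2\pi$ in the interior — so by Alexandrov's theorem it is the surface of a convex polyhedron (the degenerate flattened one). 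The remaining task is to decompose this surface into unit squares edge-to-edge.

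First I would fix the basic construction: take an axis-parallel rectangle of integer dimensions $a \times b$, subdivide each of its two copies into the obvious $ab$ unit squares, and glue the two boundaries edge-to-edge; this uses $2ab$ squares and gives a doubly covered $a\times b$ rectangle. Letting $a,b$ range so that $2ab \le n$ already produces $\Omega(n \log n)$ or so gluings, which is not enough. To reach $\Omega(n^3)$ I would instead use rectilinear (staircase-like) hexagons or L-shaped polygons: a convex rectilinear polygon here means an axis-parallel lattice polygon that is the intersection of the grid with a convex region, so its boundary turns only left. Parametrize such a shape by, say, three integer side lengths, giving three independent degrees of freedom; the doubly covered surface then decomposes into unit squares (the lattice squares inside the polygon, doubled), and the total square count is a polynomial in the three parameters. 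Choosing the parameters in ranges of size $\Theta(n^{1/2})$ each while keeping the area $O(n)$ — or more carefully, choosing two parameters of size $\Theta(n)$ and a third that is forced to stay small — yields $\Omega(n^3)$ such surfaces.

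The main obstacle is two-fold and must be handled carefully. First, one must verify these gluings are \emph{pairwise non-isomorphic}: by the criterion cited at the start of Section~2 (polyhedra are isomorphic iff the multiset of pairwise geodesic distances between their curvature vertices agrees), it suffices to note that a doubly covered convex polygon $P$ has its curvature vertices exactly at the (convex) vertices of $P$, and the intrinsic geodesic distances between them recover the side lengths of $P$ (for a convex polygon the boundary geodesic between adjacent vertices is the side itself, since any shortcut through the interior would have to leave and re-enter through the same side). Hence distinct parameter tuples give distinct distance multisets, hence non-isomorphic gluings. Second, one must make sure the count of squares is genuinely $\le n$ while still achieving cubic growth: this is a packing bookkeeping step, and the cleanest fix is to take, e.g., an L-shaped (or staircase) hexagon whose bounding box is $h \times w$ with $hw = \Theta(n)$ but which has a notch of variable depth and width, giving a third free parameter; I would choose the ranges so that the number of valid tuples is $\Omega(n^3)$.

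I expect the genuinely delicate point to be the non-isomorphism argument — specifically, confirming that the boundary-vertex-to-boundary-vertex geodesics on a doubly covered convex polygon behave as claimed (that the shortest path between two vertices of $P$ stays on one copy and runs along the boundary, or at worst cuts across in a way whose length is still a monotone function of the parameters), so that different shapes are distinguished by their distance multisets. Once that is pinned down, the rest is an explicit family plus a counting estimate, and the lower bound $\Omega(n^3)$ follows. \qed
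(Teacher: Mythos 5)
There is a genuine gap: the family you propose does not consist of convex polyhedra. A doubly covered polygon satisfies Alexandrov's condition only if the polygon is convex --- at a reflex vertex of interior angle $\frac{3\pi}{2}$, the doubled surface has total angle $3\pi > 2\pi$, so a doubly covered L-shape, notched hexagon, or staircase polygon is not the surface of any convex polyhedron. Your own definition already signals the problem: an axis-parallel polygon ``whose boundary turns only left'' is necessarily a rectangle, so the only \emph{convex} rectilinear polygons give you just the two parameters $a,b$ with $ab \le n/2$, i.e.\ $\Theta(n\log n)$ shapes, which you correctly note is not enough. The ``notch of variable depth and width'' that supplies your third parameter is exactly the reflex feature that destroys convexity, so the construction collapses.

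The missing idea is to leave the axis-parallel world and use the \emph{diagonal} direction of the unit squares. The paper cuts the four corners of an $a\times b$ rectangle by $45^\circ$ segments through lattice points, producing a convex octagon; the doubly covered octagon is still an edge-to-edge gluing of whole unit squares, because each square straddling a $45^\circ$ edge is simply folded along its diagonal, its two halves lying on the two copies. This gives many more degrees of freedom than a rectangle: choosing the two cut points on each vertical side contributes $\binom{b}{2}^2 \sim \frac{1}{4}b^4$ octagons for fixed $a,b$, and summing over $a \le \frac{1}{2}\sqrt{n}$, $b \le \frac{a}{2}$ yields $\Omega\bigl(\sum_a a^5\bigr) = \Omega(n^3)$. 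Your non-isomorphism discussion, while not wrong, is not where the difficulty lies: distinct convex octagons are non-congruent planar polygons, so by Alexandrov's uniqueness theorem they give distinct polyhedra and hence non-isomorphic gluings, with no need to analyze geodesics.
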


\begin{proof}
To prove the theorem, we construct a series of such gluings. These gluings correspond to doubly-covered octagons, the octagons being obtained by cutting edges of a rectangle with sides no longer than $\frac{\sqrt{n}}{2}$, one at least twice as long as the other, see Figure~\ref{fig:cutEx}.

\begin{figure}[h]
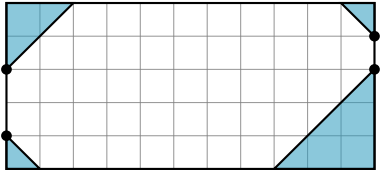
 \centering
\tikz[scale=0.44]{
	\foreach \x in {0,...,11} {\draw[gray,opacity=0.8] (\x,0) -- (\x,5);}
	\foreach \y in {0,...,5} {\draw[gray,opacity=0.8] (0,\y) -- (11,\y);}
	\draw[thick] (0,0) -- (11,0) -- (11,5) -- (0,5) -- cycle;
	\filldraw[fill=bluetri,thick,fill opacity=0.55] (10,5) -- (11,4) -- (11,5) -- cycle;
	\filldraw[fill=bluetri,thick,fill opacity=0.55] (8,0) -- (11,3) -- (11,0) -- cycle;
	\filldraw[fill=bluetri,thick,fill opacity=0.55] (0,0) -- (1,0) -- (0,1) -- cycle;
	\filldraw[fill=bluetri,thick,fill opacity=0.55] (0,3) -- (2,5) -- (0,5) -- cycle;
	\fill (11,4) circle[radius = 1.6mm]
		(11,3) circle[radius = 1.6mm]
		(0,3) circle[radius = 1.6mm]
		(0,1) circle[radius = 1.6mm];}
\caption{An example of an octagon produced by cutting angles of a rectangle} \label{fig:cutEx}
\end{figure}

Pick width and height of the rectangle. Denote width by $a$, pick it so that $0 < a \le \frac12 \sqrt{n}$. Denote height by $b$, pick it so that $0 < b \le \frac{a}2$. An octagon is defined by its vertices on the vertical edges of the rectangle (those are highlighted in Figure~\ref{fig:cutEx}). There are $\binom{b}{2}^2 \sim \frac14 b^4$ ways to choose these vertices.

To sum up, the number of octagons obtained by cutting edges of a rectangle is
\[ \Omega\ll \sum\limits_{a=1}^{\frac{\sqrt{n}}{2}} \sum\limits_{b=1}^{\frac{a}{2}} b^4 \rr =
	\Omega\ll \sum\limits_{a=1}^{\frac{\sqrt{n}}{2}} a^5 \rr = \Omega \ll n^3 \rr. \]

\end{proof}

\begin{theorem}
	The bound of Theorem~\ref{thm:n3} is tight: there are $O(n^3)$ doubly covered convex polygons that can be glued from $n$~squares.
\end{theorem}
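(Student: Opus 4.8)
The plan is to prove a rigidity statement --- that every doubly covered convex polygon glued edge-to-edge from at most $n$ unit squares is, after a rigid motion, an integer-lattice polygon of a very restricted shape --- and then to count those shapes. First I would pin down the combinatorics. At every vertex of an edge-to-edge gluing of unit squares the cone angle is a positive multiple of $\pi/2$, since it is a sum of square-corner angles $\pi/2$ (a point interior to a glued square edge contributes a straight angle $\pi$, and a point interior to a square face is flat, so neither is a genuine vertex). A genuine vertex of a convex polyhedron has cone angle below $2\pi$, and the genuine vertices of the doubly covered polygon $P$ are exactly the vertices of $P$, where the cone angle equals $2\theta_i$ with $\theta_i$ the interior angle; hence every $\theta_i$ lies in $\{\pi/4,\pi/2,3\pi/4\}$. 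By Gauss--Bonnet $\sum_i(2\pi-2\theta_i)=4\pi$, that is, $\sum_i(\pi-\theta_i)=2\pi$, and each term is at least $\pi/4$, so $P$ has at most $8$ vertices; also the total area of the $n$ squares equals $2\,\mathrm{area}(P)$, so $\mathrm{area}(P)\le n/2$.

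Next I would place $P$ on the unit grid, exactly as in the proof of Theorem~\ref{thm:n36}: developing $P$ onto the plane, the unit squares disjoint from the fold curve $\partial P$ develop to unit squares lying, coplanar and edge-to-edge, on a common integer grid; adjacent ones are aligned with no fractional offset precisely because the gluing is edge-to-edge, and the partial squares cut off by $\partial P$, which share whole unit edges with them, extend the same grid. Moreover each vertex of $P$ is a common corner of the squares meeting it (such a point is interior to no square or square edge), hence a lattice point. With the angle restriction, every edge of $P$ is then axis-parallel of integer length or a vector $(\pm k,\pm k)$ with $k$ a positive integer, and by convexity each of the eight axis and diagonal directions occurs at most once. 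Thus $P$ is a rectangle $[0,W]\times[0,H]$ with $W,H$ non-negative integers whose four corners are truncated by right isosceles triangles of integer legs $c_1,c_2,c_3,c_4\ge 0$ (allowing some $c_i=0$ or some axis edge of length $0$ accounts for all polygons with fewer than eight vertices), subject to the linear inequalities that the four remaining axis edges have non-negative length --- which in particular force $c_i\le\min(W,H)$.

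It then remains to count the integer tuples $(W,H,c_1,c_2,c_3,c_4)$ satisfying these constraints together with $\mathrm{area}(P)=WH-\frac{1}{2}\sum_i c_i^2\le n/2$. From $\sum_i c_i^2\le 2\min(W,H)^2$ and, say, $H=\min(W,H)$, we get $2WH-n\le\sum_i c_i^2\le 2H^2$, hence $W\le n/(2H)+H$. For $H\le\sqrt n$ there are $O(H^4)$ choices of $(c_1,\ldots,c_4)$ and $O(n/H)$ of $W$, contributing $\sum_{H\le\sqrt n}O(nH^3)=O(n^3)$. For $H>\sqrt n$ the inequality $\sum_i c_i^2\ge 2H^2-n$ gives $c_1^2+c_4^2\ge H^2-n$ and $c_2^2+c_3^2\ge H^2-n$, confining each of $(c_1,c_4)$ and $(c_2,c_3)$ to two slivers of size $O((n/H)^2)$ near the acute corners of the triangles $\{c_1+c_4\le H\}$ and $\{c_2+c_3\le H\}$, so the count here is $\sum_{H>\sqrt n}O((n/H)^5)=O(n^3)$. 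Hence there are $O(n^3)$ such polyhedra in total, with the octagon all of whose angles are $3\pi/4$ --- parametrized by roughly six independent integers of size roughly $\sqrt n$ --- being the bottleneck and reproducing the $\Omega(n^3)$ of Theorem~\ref{thm:n3}.

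I expect the main obstacle to be the grid-placement step: making rigorous that edge-to-edge gluing forces all the developed unit squares onto one common lattice and the polyhedron's vertices onto its lattice points (this is used implicitly in the proof of Theorem~\ref{thm:n36} and should be spelled out, in particular ruling out fractional or $\sqrt 2$-scaled placements). The angle analysis is routine, and the final count is just the two elementary sums above, which mirror, as an upper bound, the computation in the proof of Theorem~\ref{thm:n3}.
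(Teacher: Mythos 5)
You take the same route as the paper: reduce to octagons cut from an axis-aligned lattice rectangle, parametrize them by six integers (your $(W,H,c_1,\dots,c_4)$, the paper's $(a,A,B,D,F,H)$), and count. Where you genuinely differ is in the counting, and your version is the more robust one. The paper splits on the \emph{larger} side $a$ and relies on the stated constraint $a\cdot b\le \frac{n}{2}$; but that inequality is backwards --- $\frac{n}{2}$ is the area of the octagon, which is at most $ab$, not at least --- and it genuinely fails for thin diagonal strips (e.g.\ an $a\times a$ square with two opposite corners cut by right isosceles triangles of legs $a-2$ has area $4a-4$, so $n=\Theta(a)$ while $ab=\Theta(n^2)$), so the paper's two cases, $a\le\sqrt n$ and $a\ge\sqrt n,\ b\le n/a$, do not cover all octagons. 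Your split on the \emph{smaller} side $H$, together with using the area constraint $WH-\frac12\sum c_i^2\le \frac{n}{2}$ in the regime $H>\sqrt n$ to confine $(c_1,c_4)$ and $(c_2,c_3)$ to $O((n/H)^2)$-point slivers near the corners of their feasible triangles, handles exactly these thin octagons; both of your sums are correct and give $O(n^3)$. You also spell out the rigidity step (interior angles in $\{\pi/4,\pi/2,3\pi/4\}$ via Gauss--Bonnet, vertices at lattice points, each of the eight edge directions used at most once by convexity) that the paper only asserts in one sentence; as you note, the common-lattice claim is the one point that still needs a careful written argument, but that is a debt the paper's own proof carries as well.
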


\begin{proof} Edges of a doubly covered polygon glued from squares can only have four directions: vertical, horizontal, or inclined by $\frac{\pi}{2}$ or $\frac{3\pi}{2}$. Thus any doubly covered polygon glued from squares is an octagon cut from a rectangle. Some edges of the octagon, however, may have zero length.

\begin{figure}[h]
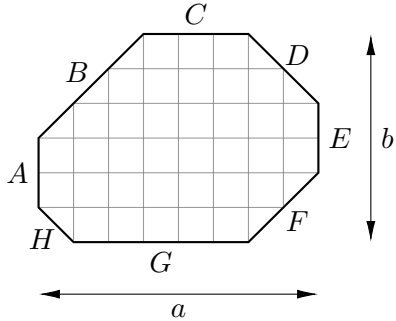
 \centering \tikz[scale=0.46]{
	\foreach \x in {1,...,7} {\draw[gray,opacity=0.8] (\x,0) -- (\x,6);}
	\foreach \y in {1,...,5} {\draw[gray,opacity=0.8] (0,\y) -- (8,\y);}
	\filldraw[fill=white,draw=white,thick] (0,0) -- (0,1) -- (1,0) -- cycle;
	\filldraw[fill=white,draw=white,thick] (6,0) -- (8,0) -- (8,2) -- cycle;
	\filldraw[fill=white,draw=white,thick] (8,4) -- (8,6) -- (6,6) -- cycle;
	\filldraw[fill=white,draw=white,thick] (0,3) -- (3,6) -- (0,6) -- cycle;
	\draw[thick] (1,0) -- ++(5,0) -- ++(2,2) -- ++(0,2)
		-- ++(-2,2) -- ++(-3,0) -- ++(-3,-3) -- ++(0,-2) -- cycle;
	\draw (3.5,0) node[below]{$G$} (7.4,0.6) node{$F$}
		(8,3) node[right]{$E$} (7.4,5.4) node{$D$}
		(4.5,6) node[above]{$C$} (1.1,4.9) node{$B$}
		(0,2) node[left]{$A$} (0.1,0.1) node{$H$};
	\draw[<->] (0,-1.5) -- (8,-1.5);
	\draw[<->] (9.5,0) -- (9.5,6);
	\draw[white] (-1.5,0) -- (-1.5,6) (-1.5,3) node[left]{$b$};
	\draw (4,-1.5) node[below]{$a$} (9.5,3) node[right]{$b$};
}
	\caption{An octagon cut from rectangle and its dimensions}
	\label{fig:octparam}
\end{figure}

Let us denote the numbers of squares traversed by the edges by $A$,~\ldots, $H$, as shown in Figure~\ref{fig:octparam}. The width of the circumscribed rectangle is denoted by $a$, and the height of the circumscribed rectangle is denoted by $b$. We can think that $a \le b$. Note that

\[ \begin{cases}
	A+B+H = D+E+F = b \\
	B+C+D = F+G+H = a \\
	b \le a\\ a \cdot b \le \frac{n}{2}
\end{cases} \]

Six variables out of these ten: $a$, $A$, $B$, $D$, $F$, $H$ — define other four. This means, given values of $a$, $A$, $B$, $D$, $F$, $H$, the octagon is either defined uniquely or non-existent:

\[ \begin{array}{l}
	b = A+B+H \\
	E = A+B+H-D-F \\
	C = a-B-D \\
	G = a-H-F \\
\end{array} \]

Let us now count the ways to pick $a$, $A$, $B$, $D$, $F$, $H$.

{\bfseries First case} \( 1 \le b \le a \le \sqrt{n} \). In this case all the variables $A$,~\ldots, $H$ are at most $\sqrt{n}$, which yields the number of ways to pick six variables does not exceed $(\sqrt{n})^6 = n^3$.

{\bfseries Second case} \(a \ge \sqrt{n}\), \(b \le \frac{n}{a}\). In this case $A$, $B$, $D$, $F$, $H$ are at most $\frac{n}{a}$ since they all contribute to the height of the octagon. The number of ways to pick six variables is hence equal to
	\[ \slim_{a=\sqrt{n}}^n \ll \frac{n}{a} \rr^5.\]

We now split and estimate this sum. Assume $n$ is a power of $2$ or consider the closest to $n$ power of $2$ from above.\vspace{-0.7cm}

\begin{align*}
	& \slim_{a=\sqrt{n}}^n \ll \frac{n}{a} \rr^5\ \ \le\ \ \sloglim
		\slim_{a=2^{i-1}}^{2^i} \ll \frac{n}{a} \rr^5\ \ \le \\
 	\le\ \ & \sloglim \ll \frac{n}{2^{i-1}} \rr^5 \cdot 2^i\ \ =\ \ \sloglim
 		2n \cdot \ll \frac{n}{2^{i-1}} \rr^4\ \ = \\
	=\ \ & \slim_{i=0}^{\frac{\log n}{2}} 2n \cdot \ll \frac{\sqrt{n}}{2^{i-1}} \rr^4
		\ \ =\ \ 2n^3 \cdot \slim_{i=0}^{\frac{\log n}{2}} \frac{16}{2^{4i}}\ \ =\ \ O(n^3).
\end{align*} \end{proof}

We implemented an algorithm that enumerates all the gluings of at most $n$ squares for a given graph structure of a convex polyhedron. It showed that one gluing can admit several ways to cut itself into flat polygons, see Figure~\ref{fig:twonets}. Thus it can appear in the list several times.

\begin{figure}[h]
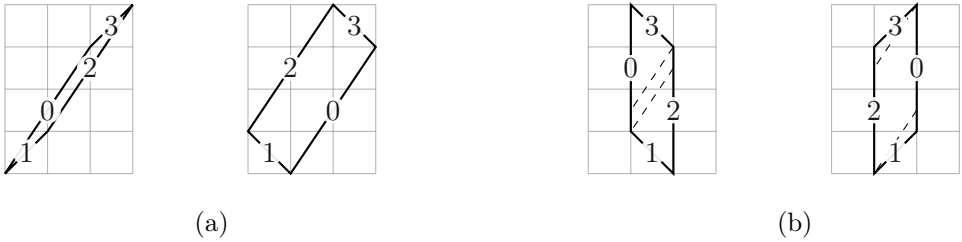
 \centering
      \begin{subfigure}[b]{0.44\textwidth} \centering
            \tikz[scale=0.56]{\fill[white] (-0.4,-0.4) rectangle (4.4,4.4);
                  \foreach \i in {0,...,3} {\draw[gray,opacity=0.6] (\i,0) -- (\i,4);}
                  \foreach \i in {0,...,4} {\draw[gray,opacity=0.6] (0,\i) -- (3,\i);}
                  \draw[thick] (2,3) -- (0,0) \midnode{0} --
                        (1,1) \midnode{1} -- (3,4) \midnode{2} --
                        cycle \midnode{3};}
            \quad
            \tikz[scale=0.56]{\fill[white] (-0.4,-0.4) rectangle (4.4,4.4);
                  \foreach \i in {0,...,3} {\draw[gray,opacity=0.6] (\i,0) -- (\i,4);}
                  \foreach \i in {0,...,4} {\draw[gray,opacity=0.6] (0,\i) -- (3,\i);}
                  \draw[thick] (3,3) -- (2,4) \midnode{3} --
                        (0,1) \midnode{2} -- (1,0) \midnode{1} --
                        cycle \midnode{0};}
      \caption{} \end{subfigure} \qquad
      \begin{subfigure}[b]{0.44\textwidth} \centering
            \tikz[scale=0.56]{\fill[white] (-0.4,-0.4) rectangle (4.4,4.4);
                  \foreach \i in {0,...,3} {\draw[gray,opacity=0.6] (\i,0) -- (\i,4);}
                  \foreach \i in {0,...,4} {\draw[gray,opacity=0.6] (0,\i) -- (3,\i);}
                  \draw[dashed] (2,3)--(1,1.5) (2,2.5)--(1,1);
                  \draw[thick] (1,4) -- (1,1) \midnode{0} --
                        (2,0) \midnode{1} -- (2,3) \midnode{2} --
                        cycle \midnode{3};}
            \quad
            \tikz[scale=0.56]{\fill[white] (-0.4,-0.4) rectangle (4.4,4.4);
                  \foreach \i in {0,...,3} {\draw[gray,opacity=0.6] (\i,0) -- (\i,4);}
                  \foreach \i in {0,...,4} {\draw[gray,opacity=0.6] (0,\i) -- (3,\i);}
                  \draw[dashed] (2,4)--(1,2.5) (2,1.5)--(1,0);
                  \draw[thick] (2,4) -- (2,1) \midnode{0} --
                        (1,0) \midnode{1} -- (1,3) \midnode{2} --
                        cycle \midnode{3};}
      \caption{} \end{subfigure}
      \caption{Doubly covered parallelogram can be cut into two flat quadrilaterals in two ways, the latter consisting of its faces}
      \label{fig:twonets}
\end{figure}

\section{Algorithm to classify edge-to-edge gluings of squares}

The algorithm consists of the following steps:

\begin{enumerate}
	\item Generate the list of all edge-to-edge gluings of at most $n$ squares,
	denote it $L(n)$. Due to Theorem~\ref{thm:n36}, this step takes polynomial time.
	\item For each gluing in $L(n)$, generate matrix of pairwise distances
	between its vertices. Due to Theorem~\ref{thm:chruntime},
	this step takes $O(n^3)$ time per gluing.
	\item Unicalize the list of matrices up to homothety and permutation of rows and columns, leave only corresponding elements of $L(n)$. Since the matrices are of at most 8 rows and 8 columns, it takes polynomial time to remove duplicates from the list.
\end{enumerate}

The output of this algorithm is the list of all non-isomorphic edge-to-edge gluings of at most $n$ squares.

\section{Discussion}

The cornerstone of the technique we have been using is the possibility to draw a face of a polyhedron glued from squares on a planar grid. It allows us to estimate the number of valid gluings. The same technique can seemingly be applied for the cases of regular hexagons and triangles, since these polygons also tile the plane.

The method we used to prove the bound on the number of intersections between a shortest path and a polygon can also be generalized to arbitrary polygons.

\section*{Acknowledgements}

S.~L. is Directeur de recherches du F.R.S.-FNRS.

B.~Z. is supported in part by the Foundation for the Advancement of Theoretical Physics and Mathematics «BASIS» and in part by Russian Foundation of Basic Research (grant 20-01-00488).

\bibliography{boris-bac}{}

\begin{thebibliography}{1}

\bibitem{alex}
Alexandr Alexandrov.
\newblock {\em Convex Polyhedra}.
\newblock Springer-Verlag, Berlin, 2005.

\bibitem{alz-penta}
E.~Arseneva, S.~Langerman, and B.~Zolotov.
\newblock A complete list of all convex shapes made by gluing regular
  pentagons.
\newblock In {\em XVIII Spanish Meeting on Computational Geometry}, page 1–4,
  Girona, Spain, 2019.

\bibitem{kl17-hex}
Elena Arseneva and Stefan Langerman.
\newblock {Which} {Convex} {Polyhedra} {Can} {Be} {Made} by {Gluing} {Regular}
  {Hexagons}?
\newblock {\em Graphs and Combinatorics}, page 1–7, 2019.

\bibitem{chen-han}
Jindong Chen and Yijie Han.
\newblock Shortest paths on a polyhedron.
\newblock In {\em 6-th annual symposium on {Computational} geometry}, page
  360–369, Berkley, California, USA, June 1990. SCG '90.

\bibitem{DDLO02}
Erik Demaine, Martin Demaine, Anna Lubiw, and Joseph O'Rourke.
\newblock Enumerating foldings and unfoldings between polygons and polytopes.
\newblock {\em Graphs and Combinatorics}, 18(1):93--104, 2002.

\bibitem{ddlop99}
Erik Demaine, Martin Demaine, Anna Lubiw, Joseph O'Rourke, and Irena
  Pashchenko.
\newblock Metamorphosis of the cube.
\newblock In {\em Proc. SOCG}, pages 409--410. ACM, 1999.

\bibitem{DO07}
Erik Demaine and Joseph O'Rourke.
\newblock {\em Geometric folding algorithms}.
\newblock Cambridge University Press, 2007.

\bibitem{bannister2014galois}
David Eppstein, Michael~J Bannister, William~E Devanny, and Michael~T Goodrich.
\newblock The {Galois} complexity of graph drawing: {Why} numerical solutions
  are ubiquitous for force-directed, spectral, and circle packing drawings.
\newblock In {\em International Symposium on Graph Drawing}, pages 149--161.
  Springer, 2014.

\bibitem{kpd09-approx}
Daniel~M Kane, Gregory~N Price, and Erik~D Demaine.
\newblock A {Pseudopolynomial} {Algorithm} for {Alexandrov}'s {Theorem.}
\newblock In {\em WADS}, pages 435--446. Springer, 2009.

\end{thebibliography}
\bibliographystyle{plain}

\end{document}